\acrodef{NCR}{network-controlled repeater}
\acrodef{HO}[HO]{Handover}
\acrodef{UL}{uplink}
\acrodef{DL}{downlink}
\acrodef{3GPP}{3rd generation partnership project}
\acrodef{RIS}[RIS]{reconfigurable intelligent surface}
\acrodef{5G}{5th generation}
\acrodef{6G}[6G]{6th generation}
\acrodef{BS}[BS]{base station}
\acrodef{PL}[PL]{path-loss}
\acrodef{HOD}{HO decision} 
\acrodef{MIMO}{multiple input multiple output}
\acrodef{SNR}{signal-to-noise ratio}
\acrodef{SINR}[SINR]{signal-to-interference-plus-noise ratio}
\acrodef{SDP}{semidefinite programming}
\acrodef{CDF}{cumulative distribution function} 
\acrodef{CSI}{channel state information}
\acrodef{gNB}[BS]{base station}
\acrodef{AWGN}{additive white Gaussian noise}
\acrodef{CCI}{co-channel interference}
\acrodef{ULA}{uniform linear array}
\acrodef{RCS}{radar cross-section}
\acrodef{LoS}{line of sight} 
\acrodef{UE}[UE]{user equipment}
\acrodef{OFDM}{orthogonal frequency division multiplexing}
\acrodef{DoF}{degrees of freedom}
\acrodef{MI}{mutual interference}
\acrodef{Uu}{access link interface} 
\acrodef{TTT}{time-to-trigger}
\acrodef{HOM}{handover margin} 
\acrodef{RSRP}{received signal reference power}
\acrodef{RSRQ}[RSRQ]{received signal reference quality}
\acrodef{RLF}{radio link failure}
\acrodef{KPI}{key performance indicator}
\acrodef{HOC}{handover control}
\acrodef{HCP}{handover control parameter}
\acrodef{HOP}{handover probability} 
\acrodef{HOPP}{handover ping-pong} 
\acrodef{HCPs}{handover control parameters}
\acrodef{IoT}{internet of things}
\acrodef{V2X}{vehicle-to-everything}
\acrodef{UAV}{unmanned aerial vehicle}
\acrodef{AI}{artificial intelligence}
\acrodef{QoS}{quality of service}
\acrodef{THz}[THz]{terahertz}
\acrodef{NF}[NF]{near-field}
\acrodef{FF}[FF]{far-field}
\acrodef{UPW}[UPW]{uniform planewave}
\acrodef{NUSW}[NUSW]{non-uniform spherical wave}
\acrodef{SE}[SE]{ spectral efficiency}
\definecolor{lighterviolet}{RGB}{220,200,240}
\def\etc{etc.\@\xspace}
\renewcommand\@cite[2]{\textcolor{blue}{[}\textcolor{blue}{#1}\textcolor{blue}{]}} 
\newtheorem{proposition}{Proposition}  
\begin{document}

\title{Efficient handover based on Near-field and Far-field RIS for seamless connectivity}

\author{Atiquzzaman Mondal, Waheeb Tashan,~\IEEEmembership{Member,~IEEE}, Ayat Al-Olaimat, Hüseyin Arslan~\IEEEmembership{Fellow,~IEEE}

\thanks{A. Mondal, A. Olaimat and H. Arslan are with Department of Electrical and Electronics Engineering, Istanbul Medipol University, 34810 Istanbul, Türkiye.\\  W. Tashan is with Department of Electronics and Communication Engineering. Kocaeli University, 41001, Kocaeli, Türkiye}

}

{}

\maketitle
\begin{abstract}
Reconfigurable Intelligent Surfaces (RIS) is becoming a transformative technology for the upcoming 6G communication networks, providing a way for smartly maneuvering the electromagnetic waves to enhance coverage and connectivity. This paper presents an efficient handover (HO) management scheme leveraging RIS in the Fresnel region i.e., in both the near-field (NF) and far-field (FF) regions to reduce signaling overhead and optimize mobility management. For this, we analyzed the signal strength variations in the considered RIS-aided networks, considering the radiative NF and FF regions, and derive the probability density function (PDF) of the RIS-UE distance in the NF region to quantify RIS reflection gains along the user equipment (UE) trajectory. We propose a new HO algorithm incorporating several HO categories like hard handover (HHO), soft handover (SHO), RIS-aided cell breathing (RIS-CB), and RIS-aided ping-pong avoidance (RIS-PP) strategies. The proposed algorithm uses bit error rate (BER) as a key parameter to predict the minimization of unnecessary HOs by using RIS-aided pathways to retain connectivity with the serving base station (BS), which minimizes the requirement for frequent target BS searching and ultimately optimizes the HO. By restricting measurement reports and HO requests, the suggested method improves spectrum efficiency (SE) and energy efficiency (EE), especially in crowded cellular networks. Numerical results highlight significant reductions in HO rates and signaling load, ensuring seamless connectivity and improved quality of service (QoS) in 6G systems.

\end{abstract}

\begin{IEEEkeywords}
Reconfigurable intelligent surfaces (RIS), \ac{HO}, near-field (NF), far-field (FF).
\end{IEEEkeywords}

\section{Introduction}
\lettrine{T}{he} highly anticipated \ac{6G} is expected to have higher throughput, enhanced coverage, improved reliability, and very low end-to-end latency \etc as compared to its 3GPP predecessors~\cite{bassoli2021we}. To fulfill these ever-increasing demands for the next-generation wireless communication networks, new technologies are rapidly coming into the picture. In the evolution of B5G/6G communication networks, one such technology, \textit{i.e.}, \ac{RIS} is emerging as a promising one, that is strategically designed to intelligently control the communication signals by reconfiguring its reflecting elements, promoting the concept of smart radio environments, eventually aims to deliver nearly 100\% global coverage \cite{renzo2019smart, wu2019towards, yu2021smart}.
By manipulating the electromagnetic waves, RIS can enable communication signals to overcome obstacles, thereby extending network coverage and enhancing connectivity \cite{liao2023optimized, 9966212}.

Whether deployed in indoor or outdoor landscapes, RISs contribute to shaping smart radio environments that adapt dynamically to varying connectivity needs. Focusing on outdoor applications, RISs serve as a powerful tool for managing connectivity links and optimizing HO mechanisms \cite{8926369, 10041749}. With RIS technology, the HO paradigm is redefined. Instead of transitioning between \acp{BS}, RIS enhances the existing connectivity path by dynamically redirecting signals through optimized reflection i.e., it can seamlessly shift the transmission path from a direct link to an RIS-enhanced link, improving overall performance while maintaining the same transmitter node. This approach not only reduces the traffic overhead associated with traditional \ac{HO} management but also enhances network security by preventing unnecessary transmitter switches \cite{10380573}.

In addition, implementing ultra-dense small \ac{BS}s cause a high amount of HO. Therefore, due to the limitations mentioned above, maintaining the quality connection during the HO process becomes highly significant. The primary objective of HO is to seamlessly transfer the \ac{UE} to a more favorable link with superior link quality when the current link experiences degradation, such as in \ac{FF} scenarios. Given the significant signaling overhead associated with HOs, particularly in FF regions, efficient and optimal HO mechanisms are crucial to maintaining seamless service continuity in this evolving wireless landscape.
HO triggering relies on various decision indicators to preserve the quality connection. Common metrics include signal strength (RSRP)~\cite{gannapathy2023smart}, quality measures (RSRQ)~\cite{sun2021multi}, and interference-aware metrics such as SINR \cite{tashan2024optimal}.

While RIS technology has significantly propelled 6G advancement by addressing diverse scenarios and KPIs, its practical realization often involves large-scale deployments with numerous reflecting elements and high-frequency operation. Consequently, the Rayleigh distance, defined as $d_{FF} = \frac{2D^{2}}{\lambda}$, expands proportionally to the carrier frequency $f_{c}$ and the dimensions of the RIS with horizontal and vertical length $L_{y}$ and $L_{z}$ respectively i.e., $D=\sqrt{L^2_{y}+L^2_{z}}$. Unlike traditional \ac{UPW} propagation in FF communication, NF communication leverages the \ac{NUSW} model, enabling the joint estimation of both the angle and distance of a target. This two-dimensional estimation introduces opportunities to enhance the functionalities and efficiency of 6G systems. Unlike conventional wireless systems, RIS-assisted systems significantly expand the NF region, creating a virtual NF as well as FF in addition to the fields of the serving \ac{BS}, allowing precise estimation of angle, distance information, and coverage for targets within this area, particularly within the context of HO procedures.

Though adaptive switching is beneficial in RIS assisted communication, its performance can be limited by the accessibility of the \ac{BS}-RIS-UE path or RIS placement for ensuring effective communication \cite{10471522, 8936989, 8839948}. Recent studies have concentrated on theoretical models that not only formulate expressions for HO rates but also examine the influence of blockages, as detailed in \cite{9297352, 9615850, 10078238}. In general, RIS-assisted HO offers a promising solution to reduce redundant HOs in wireless networks. However, designing an HO strategy that fully harnesses RIS’s potential in the NF to minimize the number of HOs or find an optimal HO remains a significant challenge. In dense cellular networks, one key challenge is the significant signaling overhead generated during the HO preparation phase. Consequently, proximity to a \ac{BS} does not guarantee a stronger signal, as the RIS enhances only the serving \ac{BS}'s signal, altering the traditional HO locations. In \cite{9003219}, distance-based HO models fail to capture the complexities of RIS-assisted HO. The HO rates and the RIS reconfigurations have been examined in \cite{gao2024intelligent} where the UE can remain connected to the serving \ac{BS} through RIS or initiate the HO to the target \ac{BS}. Jiao et al. \cite{jiao2021enabling} have proposed a RIS-assisted approach to minimize unnecessary HOs and maximize spectral efficiency using deep reinforcement learning. On top of that, a discrete-time model is presented in \cite{zhang2024discrete} to model the fluctuation of signal strength for accurate HO analysis of RIS-assisted networks. Wei et al. \cite{wei2023equivalent} have proposed an equivalent model of user-\ac{BS} distance to present the problem of the RIS cascaded channel to determine the optimal deployment factor for RIS distribution that reduces the HOP. A brief summary of existing research relevant to our study is provided in Table \ref{Comparison table}. 

Since the serving BS must identify and coordinate with multiple candidate target BSs simultaneously, the volume of HO requests increases and drives up energy consumption across both serving and target cells \cite{cui2022near, zhi2024performance, shen2023multi}. Furthermore, the UEs need to continuously relay measurement reports (e.g., signal quality metrics) to facilitate accurate HO decisions, which further exacerbates signaling load and diminishes \ac{SE} by consuming valuable uplink resources \cite{8643739}. Consequently, dense deployments face increased control-plane burden, reduced energy efficiency, and lower overall throughput, highlighting the need for more effective HO strategies that can alleviate these performance bottlenecks\cite{liu2022deep, peng2024deep, han2020channel, cui2022channel}. 
\par Although recent studies have demonstrated the potential of RIS to improve HO performance in mmWave and sub-THz networks, the majority of these works have focused on far-field scenarios, often assuming idealized signal propagation and limited mobility models \cite{wei2025handover,wei2025analysis}.
These approaches typically neglect the unique propagation characteristics of the NF region, particularly in dense deployments and higher frequencies, where the Rayleigh distance considerably increases due to significant \ac{RIS} apertures and operational wavelengths~\cite{wang2024wideband, zhou2024near}.
Furthermore, most current HO strategies are reactive, being triggered only after a deterioration in link quality. They do not incorporate predictive mechanisms that could leverage \acp{RIS} ability to dynamically reshape wavefronts and adaptively extend coverage~\cite{jiao2021enabling}. 
Bensalem et al. \cite{bensalem2024signaling} have presented a stochastic model to explore the rate of changes in RIS settings and HO rate to control the signaling rate. Moreover, the distance is considered as the HO decision algorithm. In addition, the authors in \cite{adnan2024performance} have presented a blockage prediction method using RIS-assisted mmWave, but did not address proactive HO management based on NF propagation characteristics.

To overcome the identified challenges, with the assistance of RIS placed at the edge of the quadratic near field, we provide an efficient HO scheme. To the best of our knowledge, none of the aforementioned studies RIS assisted HO in the NF region. Additionally, there is no existing work focused on leveraging the reflecting capabilities of RIS in this context to minimize unnecessary HOs or optimize HO decisions. The main contributions of this work are outlined below:
\begin{itemize}
    \item Taking into account both the radiative NF and FF regions, we theoretically analyze serving BS signal strength changes during HO in a typical (non-RIS-assisted) scenario.
    \item We measure the signal strength fluctuation after deploying a RIS to create a virtual NF and FF to extend the coverage area of the serving BS, thereby mitigating signaling overhead and improving HO efficiency.
    \item Upon establishing the RIS connection within the NF or FF region, the geometric relationship between the HO, UE and the RIS is continuously tracked. Leveraging the correlation between measurement intervals, the probability density function (PDF) of the RIS-UE distance is derived. This derived PDF enables the calculation of the bit error rate (BER), outage probability (OP) and capacity, which are expected to trigger the HO process.
    \item Reducing the number of HO requests by limiting the potential target \ac{BS}s to a minimum (preferably a single target BS) by performing HO probability analysis. The RIS-assisted NF and FF regions will be strategically positioned to align with the vicinity of the target BS, ensuring seamless mobility management.
\end{itemize}
The organization of this work is done as follows. The system framework for the considered RIS-assisted NF networks is introduced in Section II. Section III details the mathematical modeling. Numerical results demonstrating the performance of our proposed HO scheme is presented in Section IV. Lastly, the conclusion and future directions are given in Section VI of the paper.
%
\begin{table*}[t!]
\centering
\begin{tabular}{|c|ccccccccc|c|}
\hline


                     



\textbf{Reference} & \textbf{Single BS} & \textbf{Multiple BS} & \textbf{Single RIS} & \textbf{Multiple RIS} & \textbf{HO} & \textbf{Blockage} & \textbf{Sig} & \textbf{NF/FF}\\ \hline

{\cite{10558216}}  &    &    \checkmark        & \checkmark &            & \checkmark &             &            &            \\\hline 

{\cite{you2022deploy}}            & \checkmark &    &            & \checkmark &            &  &  &            \\\hline 
                     
 {\cite{wei2023equivalent, zhang2024discrete}}            &     & \checkmark     &            & \checkmark &   \checkmark         &   &  &            \\\hline 

 {\cite{jiao2021enabling}}            &     & \checkmark     &  \checkmark          &  &   \checkmark         &  &   &  &            \\\hline 

{\cite{okaf2021analysis, 10078238}}            &     & \checkmark     &            &  &   \checkmark         &  &  \checkmark &  &            \\\hline 

 {\cite{mollel2019handover, ghosh2020analyzing}}            &     & \checkmark     &            &  &   \checkmark         &  &   & \checkmark &            \\
 \hline
Our paper &  & \checkmark & \checkmark &  &         \checkmark   &            &   \checkmark        & \checkmark \\ \hline
\end{tabular}
\caption{Comparison with existing literature in terms of BS, HO, RIS, RR, Sig (Signaling), and NN/FF}
\label{Comparison table}
\end{table*}

\begin{figure}[t!]
    \centering
    \begin{minipage}{\linewidth}
        \centering
        \includegraphics[width=\linewidth]{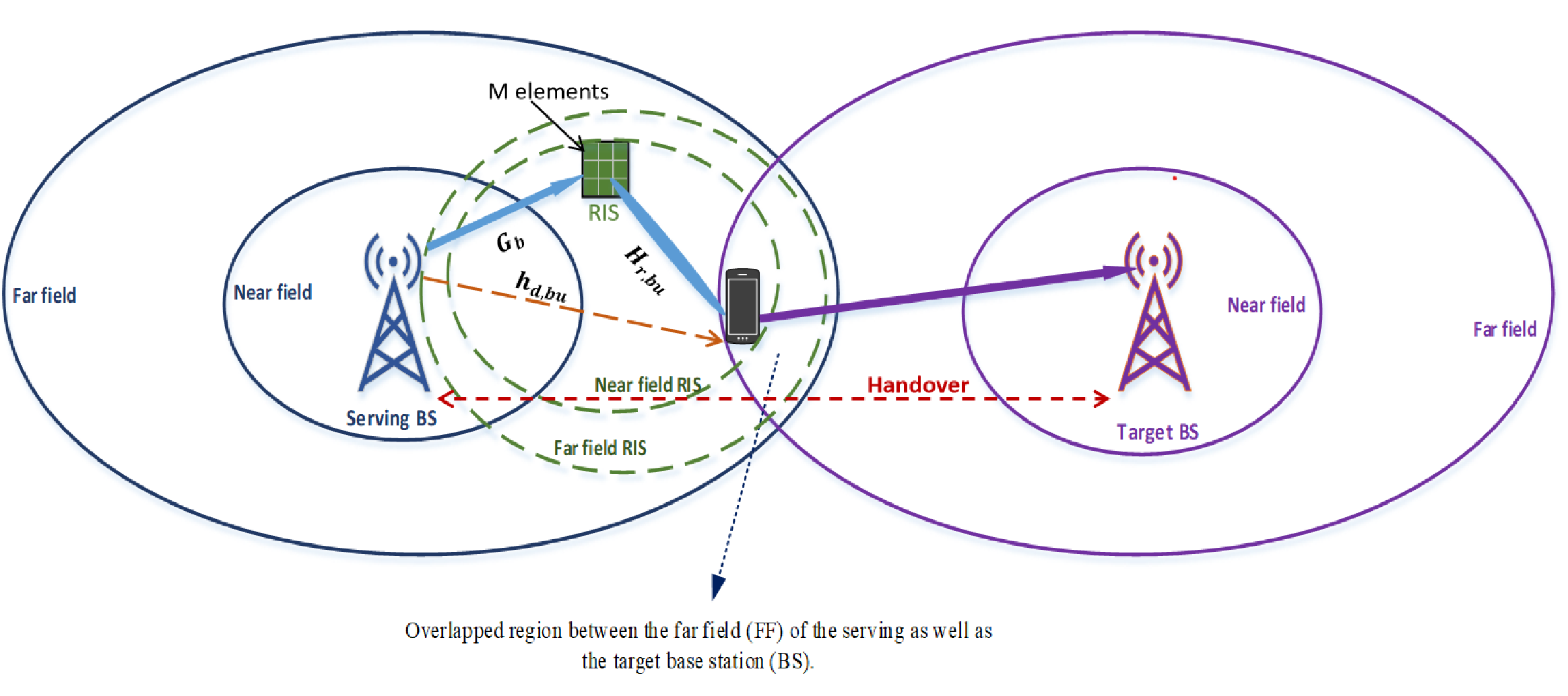}
       \caption*{(a) System model of the HO in the RIS-aided communication network in NF-FF.}
        \label{sys_model}
    \end{minipage}

    
    \begin{minipage}{\linewidth}
        \centering
        \includegraphics[width=\linewidth]{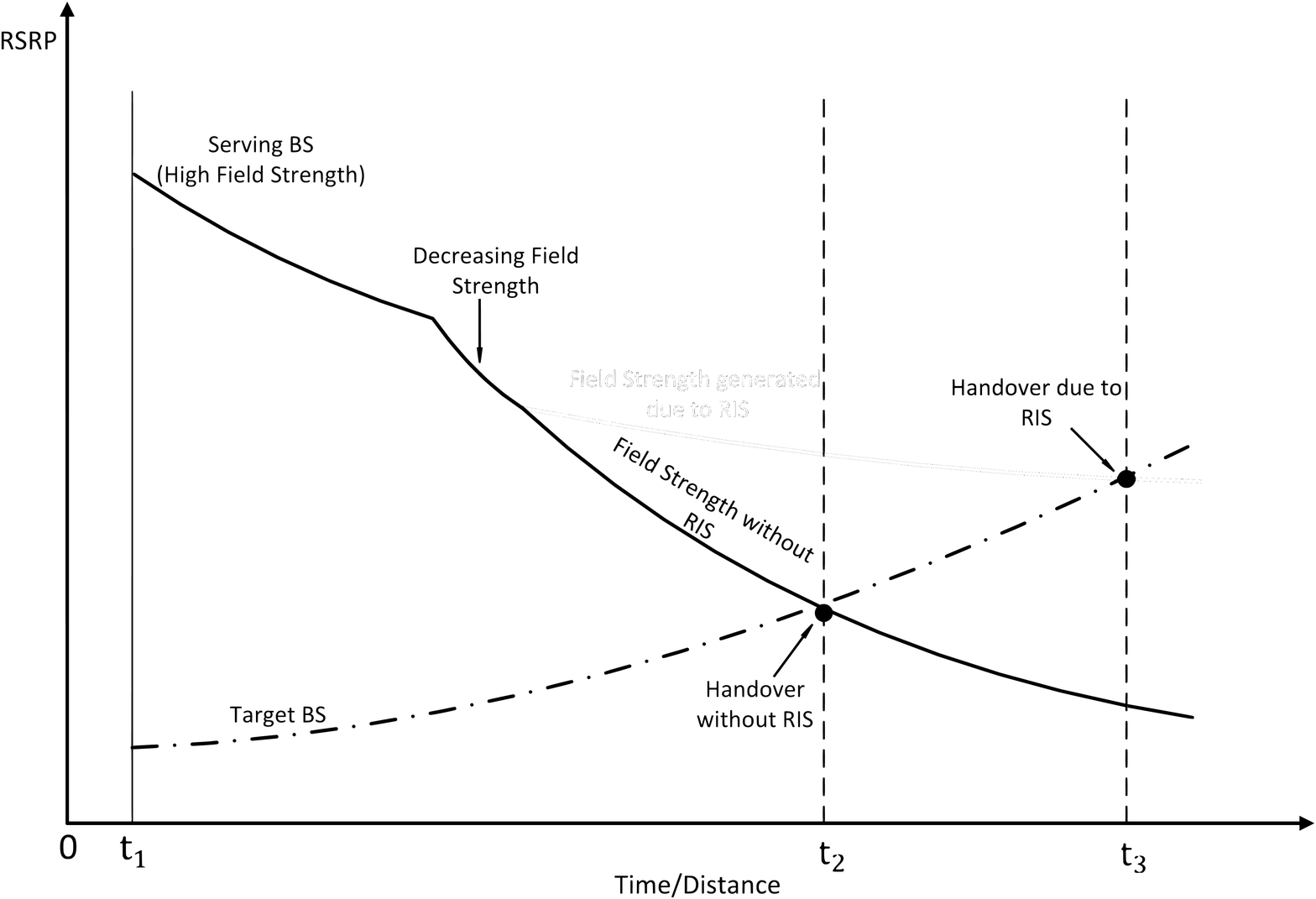}
        \caption*{(b) Illustration of RSRP values for serving and target BSs, comparing RIS-enabled and RIS-absent configurations.}
        \label{RSRP}
    \end{minipage}
    \caption{(a) RIS reconfiguration and (b) RSRP values comparison.}
   \label{combined}
\end{figure}
\section{System model}
As shown in Figure~\ref{combined}(a), a RIS-aided communication network is considered that includes a pair of BSs- serving and target, and a user. The user is assumed to be located near the boundary of the serving BS’s coverage area where the strength of the signals from the serving BS tends to be weak. Therefore, to get back in touch with the target BS, the user leverages the reflective pathway made possible by the RIS to maintain connectivity moving from the connection region of one BS towards another. We assume that users within the same cell utilize distinct orthogonal resource blocks to eliminate intra-cell interference. Given the significant propagation path loss, the field created by the RIS dynamically adjusting the weights of its antenna elements to form targeted beams to enhance signal strength for the user to have a measured decision for an optimal HO. The activation of both RIS configurations and HO occurs when wireless path reallocation becomes essential, driven by either insufficient signal strength or the identification of viable alternative paths with adequate signal quality. The frequency of these activations is captured by what we define as the RIS configuration rate and the HO rate. 

To model these rates statistically, we represent the spatial arrangement of \acp{BS} across the area as a 2-D homogeneous Poisson Point Process (PPP) with a density $\lambda_{BS}$ and a consistent transmission power of $P_{t}$, collectively referred to as $\Phi_{b}$. In our setup, we consider the presence of a RIS located between the two BS, which serve to establish alternative signal paths in situations where the signal strength between the BSs may be weakened due to distance. We assume the RIS lies within the Fresnel zone of the target BS, with its placement adhering to a PPP distribution characterized by a density $\lambda_{RIS}$. This modeling approach provides a foundation for analyzing the dynamic interplay between the field adjustments created by RIS and HO events in optimizing network performance. 

 Each BS\footnote{Note that the BS possesses perfect information of the channel state information (CSI) for every channel involved.} comprises $M$ antenna elements, while the RIS is equipped with $N$ passive reflecting elements. Meanwhile, $\mathbf{G}_{b,r} \in \mathbb{C}^{N \times M}$, $\mathbf{h}_{r,u} \in \mathbb{C}^{N \times 1}$, $\mathbf{h}_{d,u} \in \mathbb{C}^{M \times 1}$ are considered as equivalent channels corresponding to BS-RIS, RIS-UE, and BS-UE respectively. $\mathbf{\Phi} = \operatorname{diag}\left\{\Phi_1, \Phi_2, \ldots, \Phi_N\right\}$ denotes the RIS's phase shift matrix, where $\Phi_n= \beta_n e^{j \phi_n}$ is the $n_{\text{th}}$ element of the reflection matrix ( $\phi_n \in [0, 2\pi]$ 
 and $\beta_n \in [0, 1]$). 
 
\subsection{Near-field region of the RIS:}
\begin{figure}[t!]
        \centering
        \includegraphics[width=\linewidth]{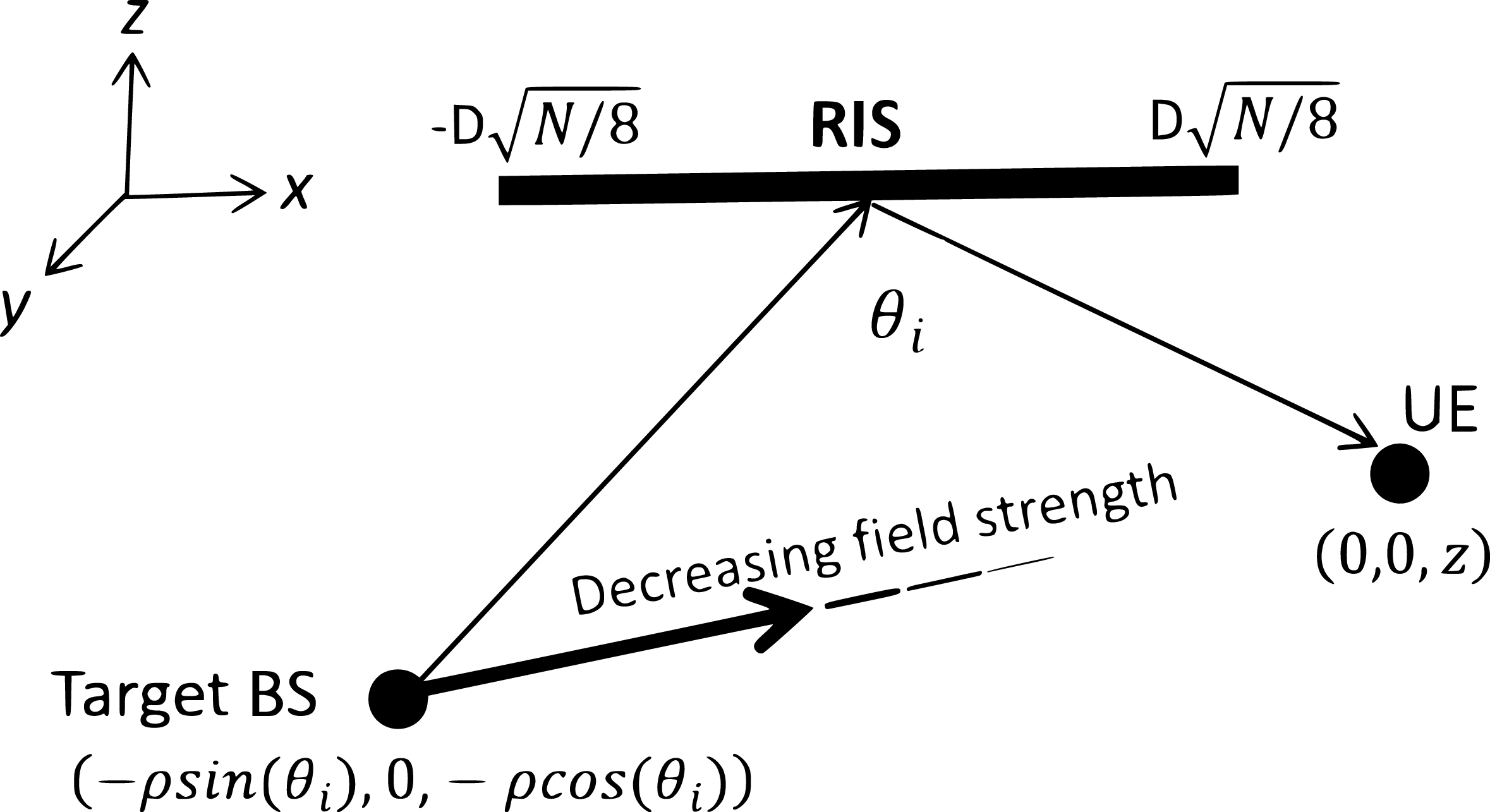}
    \caption{NF RIS reconfiguration.}
   \label{NF}
\end{figure}
Assume the following: $(-\rho \sin \theta_i, 0, \rho \cos \theta_i)$ is the coordinate of the serving \ac{BS}, $\rho$ is the distance between the serving \ac{BS} and the RIS center, $\theta_i$ is the angle of incidence in the $xz$-plane, and $(0, 0, z)$ is the location of the receiving antenna. Also, assume that there are no alternate propagation channels between the serving BS and UE and that there is an unobstructed LoS channel-links i.e., RIS-UE and RIS-serving BS. To ensure that the serving BS is in the FF area of the RIS, the distance $\rho$ is considered sufficiently large. As a result of the FF approximation, the serving BS produces a plane wave-like signal with an electric field strength $E_i$ that is polarized along the $y$-axis. The electric field's distribution throughout the RIS surface can be expressed as
\begin{align}\label{eqn_1}
  E_t(x, y) &= \frac{E_i}{2\sqrt{\pi\rho}} \exp\left\{ \frac{-2j\pi}{\lambda} (\rho + \sin(\theta_i)x)\right\},
\end{align}
where $\lambda$ is the wavelength.

However, the electric field at the UE caused by the reflected signal coming from the RIS is given by:
\begin{align}\label{eqn_2}
    E(x, y) = \frac{E_0}{2} \sqrt{\frac{z\left(x^2 + z^2\right)}{\pi\left( x^2 + y^2 + z^2 \right)^{5/2}}}* e^{-\frac{j\pi}{2} \sqrt{x^2 + y^2 + z^2}}.
\end{align}

\begin{proposition}
Assume that the Fraunhofer distance from the serving BS and the Fresnel region are characterized as $d_{\text{F}} = \frac{2D^2}{\lambda}$ and $d_F \geq 1.2D $, respectively, where $D$ is the aperture diameter. Therefore, the signal from the RIS behaves as if it were from the serving BS and therefore it can be expressed as
\begin{align}\label{prop1_eq3}
    &\left( \frac{2}{D^2 N} \right)^2 \left| e^{-j \frac{2\pi}{\lambda} z} \int\limits_{-D/\sqrt{\frac{N}{8}}}^{D/\sqrt{\frac{N}{8}}} \int\limits_{-D/\sqrt{\frac{N}{8}}}^{D/\sqrt{\frac{N}{8}}} e^{-j \frac{2\pi}{\lambda} \left( \frac{x^2 + y^2}{2F} \right)} \, dx \, dy \right|^2 \notag \\
    &= \left( \frac{8 F}{d_{\text{FA}}} \right)^2 \left\{ C^2 \left( \frac{d_{\text{FA}}}{8 F} \right) + S^2 \left( \frac{d_{\text{FA}}}{8 F} \right) \right\}^{2},
\end{align}
where $d_{FA}=Nd_{F}$ and $ F = \frac{F_z}{|F_z - z|}$ is the focal point deviation. 
\end{proposition}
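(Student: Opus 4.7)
The plan is to recognize the double integral inside the modulus as a separable Fresnel-type integral and reduce it to the standard Fresnel cosine and sine integrals $C(\cdot)$ and $S(\cdot)$. First I would note that the quadratic phase factors as $e^{-j\pi x^{2}/(\lambda F)}\,e^{-j\pi y^{2}/(\lambda F)}$, so the double integral equals the product $I_{x}\cdot I_{y}$ of two identical one-dimensional integrals over $[-D\sqrt{8/N},\,D\sqrt{8/N}]$. The outer factor $e^{-j2\pi z/\lambda}$ is unimodular and therefore disappears when the squared magnitude is formed, so only the two inner integrals carry amplitude information.

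Second, I would normalise each one-dimensional integral by the substitution $t = x\sqrt{2/(\lambda F)}$, which turns the quadratic phase into the canonical form $e^{-j\pi t^{2}/2}$. Using Euler's identity and the symmetry of the integration limits about the origin then converts each integral into $\sqrt{2\lambda F}\,[\,C(t_{0}) - j\,S(t_{0})\,]$, where $t_{0}$ denotes the transformed upper limit. Multiplying the $x$ and $y$ contributions and taking the modulus squared yields an expression of the form $4\lambda^{2} F^{2}\,[\,C^{2}(t_{0}) + S^{2}(t_{0})\,]^{2}$, which already has the structure appearing on the right-hand side.

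Third, I would substitute the stated definitions $d_{F} = 2D^{2}/\lambda$ and $d_{FA} = N d_{F}$ to simplify both the external prefactor and the argument of the Fresnel integrals. The numerical constants are arranged so that the outer factor $(2/(D^{2}N))^{2}$ combined with $4\lambda^{2}F^{2}$ collapses cleanly into $(8F/d_{FA})^{2}$, while the transformed limit $t_{0}$ rewrites in terms of the single dimensionless ratio $d_{FA}/(8F)$, reproducing the right-hand side exactly.

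The main obstacle is the bookkeeping of constants rather than any deep analytic step: the factor $\sqrt{8/N}$ in the integration limits, the prefactor $1/(D^{2}N)$, and the two length scales $d_{F}$ and $d_{FA}$ must interlock correctly after the Fresnel normalisation so that every $D$, $N$, and $\lambda$ disappears from the final expression. Once the quadratic phase is brought to canonical form, the identification with the Fresnel integral representation is automatic, and the result follows by direct algebraic simplification rather than by introducing any new physical or approximation argument beyond the far-field plane-wave assumption on the serving \ac{BS}.
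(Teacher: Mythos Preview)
Your computational strategy for establishing the equality in \eqref{prop1_eq3} is sound: separating the double integral, normalising to the canonical Fresnel phase $e^{-j\pi t^{2}/2}$, and then reading off $C(t_{0})-jS(t_{0})$ is exactly how such a quadratic-phase integral is evaluated, and your accounting of the unimodular factor and the prefactor $(2/(D^{2}N))^{2}$ collapsing against $4\lambda^{2}F^{2}$ into $(8F/d_{FA})^{2}$ is correct.

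However, the paper's own argument in Appendix~A takes a genuinely different route. Rather than starting from the left-hand side of \eqref{prop1_eq3} and computing it, the paper \emph{derives} that left-hand side from physics: it writes down the normalised antenna gain $G_{\text{antenna}}=\bigl|\int_{A}E(x,y)\,dx\,dy\bigr|^{2}\big/\bigl((D^{2}/2)\int_{A}|E(x,y)|^{2}\,dx\,dy\bigr)$ using the field expression \eqref{eqn_2}, then injects the RIS phase-shift profile $e^{+j\frac{2\pi}{\lambda}\left(\frac{x^{2}+y^{2}}{2F}+\rho\sin(\theta_{i})x\right)}$ into the integrand, and finally replaces $2/D^{2}$ by $2/(D^{2}N)$ to model the $N$-element surface. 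In other words, the paper explains \emph{where the left-hand integral comes from} as a physical gain quantity, but does not spell out the Fresnel-integral reduction to the right-hand side at all.

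So the two arguments are complementary rather than equivalent: the paper justifies the integrand and the prefactor from the antenna-gain and incident-field setup, whereas you supply the explicit analytic evaluation that converts that integral into the Fresnel $C$ and $S$ functions. Your write-up is a valid proof of the displayed identity, but it bypasses the physical derivation that Appendix~A is actually devoted to; conversely, the paper's proof leaves as implicit precisely the Fresnel-integral bookkeeping that you carry out in detail.
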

\begin{proof}
    Refer Appendix A.
\end{proof}

Therefore, the compound channel link i.e., serving BS to the UE via the RIS element $(r,c)$\footnote{Here, $r \in \{1, \ldots, \sqrt{N}\}$ and $c \in \{1, \ldots, \sqrt{N}\}$ indicate the row and the column index in the $x$ and $y$ axes respectively. Therefore, the position of the antenna with index $(r,c)$ centered at $(\bar{x}_r, \bar{y}_c, 0)$ can be given as $\bar{x}_r = \left( r - \frac{\sqrt{N} + 1}{2} \right) \frac{D}{\sqrt{2}}$ and
$\bar{y}_c = \left( c - \frac{\sqrt{N} + 1}{2} \right) \frac{D}{\sqrt{2}}$ respectively.} can be expressed as
\begin{align}\label{eqn_4}
    \mathbf{h}_{r,c} &=  \frac{\sqrt{2}}{D} \left\{e^{-j\varphi_{r,c}} \int\limits_{A_{r,c}} \frac{E_t(x, y)}{E_i} \tilde{E}(x, y) dx dy\right\},
\end{align}
where $\varphi_{r,c} \in [0, 2\pi)$ denotes the phase shift induced by the RIS element.

Therefore, the overall received signal at the UE after reflecting from the RIS can now be expressed as:
\begin{align}\label{eqn_5}
    \mathbf{y}_{0} = &\sum_{i=1}^{N} \mathbf{h}_{(r,c),i} s_{u} + n_{0},
\end{align}
where $s_u$, and $n_{0}\sim \mathcal{CN}\left(0, \sigma^{2} \right)$ are the transmitted symbol, beamforming vector for UE, and additive white Gaussian noise (AWGN) respectively. Now, let us split the channel gain of the compound channel to its individual gains, i.e., $\mathcal{G}_{br,i}\sim \mathcal{N}(\mu_1, \sigma_1^2)$ and $\mathcal{G}_{ru,i}\sim \mathcal{N}(\mu_2, \sigma_2^2)$ as the gains of the channel links i.e., serving-BS to RIS and RIS to UE respectively, with $\left(\mu_{1}, \mu_{2}\right)$ being the means, and $\left(\sigma_{1}^{2}, \sigma_{2}^{2}\right)$ the variances, and all RIS elements being considered as independent and identically distributed (i.i.d.). Assuming perfect channel state information (CSI), the received signal can now be simplified as:
\begin{align}\label{eqn_6}
    \mathbf{y}_{0} = \sum_{i=1}^{N}\mathcal{G}_{br,i}\mathcal{G}_{ru,i} s_{u} + n_{0}.
\end{align}

The instantaneous SNR can therefore be expressed as
\begin{align}\label{eqn_7}
    \gamma = \frac{P_{0}\left| \sum\limits_{i=1}^{N} \mathcal{G}_{br,i}\mathcal{G}_{ru,i} \right|^2}{\sigma_n^2} = \bar{\gamma}\left( \sum_{i=1}^{N} \mathcal{G}_{bru,i} \right)^2,
\end{align}
where $\mathcal{G}_{bru} = \mathcal{G}_{br}\mathcal{G}_{ru}$, $ \bar{\gamma} = \frac{P_t}{\sigma_n^2} $ is the average SNR, and $P_{0}$ is the transmitted power. 

As large number of reflecting elements, i.e., $(N>>1)$ is considered for the creation of NF region, therefore by applying the central limit theorem, we can approximate the combined gain as $\mathcal{G} \approx \mathcal{N}(\mu_{\mathcal{G}}, \sigma_{\mathcal{G}}^2)$, where:
\begin{align}\label{eqn_8}
    \mu_{\mathcal{G}} &= N \mu_1 \mu_2, \\
    \sigma_{\mathcal{G}}^2 &= N \left[ (\sigma_1^2 + \mu_1^2)(\sigma_2^2 + \mu_2^2) - (\mu_1 \mu_2)^2 \right].
\end{align}
Therefore, the probability density function (PDF) of $\mathcal{G}$ can be expressed as
\begin{align}\label{eqn_9}
    f_{\mathcal{G}}(g) = \frac{1}{\sqrt{2\pi N \sigma_{\mathcal{G}}^2}} \exp\left( -\frac{(g - N \mu_1 \mu_2)^2}{2 N \sigma_{\mathcal{G}}^2} \right).
\end{align}
Since $\gamma = \bar{\gamma} \mathcal{G}^2$ and let $Z = \mathcal{G}^2$, therefore the PDF can be rewritten as
\begin{align}\label{eqn_10}
    f_Z(z) &= \frac{1}{\sqrt{2\pi N \sigma_{\mathcal{G}}^2 z}} \left[ \exp\left( -\frac{(\sqrt{z} - N \mu_1 \mu_2)^2}{2 N \sigma_{\mathcal{G}}^2} \right) \right.\nonumber\\
   &+  \left.\exp\left( -\frac{(\sqrt{z} + N \mu_1 \mu_2)^2}{2 N \sigma_{\mathcal{G}}^2} \right) \right], \quad z \geq 0.
\end{align}
Therefore, the PDF of $\gamma$ can now be expressed as
\begin{align}\label{eqn_11}
    &f_{\gamma}(\gamma) = \nonumber\\
    &\frac{1}{\sqrt{2\pi N \sigma_{\mathcal{G}}^2 \gamma \bar{\gamma} }} \left[ \exp\left( -\frac{\left( \sqrt{\frac{\gamma}{\bar{\gamma} }} - N \mu_1 \mu_2 \right)^2}{2 N \sigma_{\mathcal{G}}^2} \right) \right.\nonumber\\
    &+\left. \exp\left( -\frac{\left( \sqrt{\frac{\gamma}{\bar{\gamma} }} + N \mu_1 \mu_2 \right)^2}{2 N \sigma_{\mathcal{G}}^2} \right) \right], \quad \gamma \geq 0.
\end{align}
\subsection{Average Bit Error Rate (BER)}
Employing the expression obtained in \eqref{eqn_11}, the average BER can be defined as
\begin{align}\label{eqn_12}
    \text{BER} = \int\limits_0^{\infty} \frac{1}{2} \text{erfc}\left( \frac{\sqrt{\gamma}}{2} \right) f_{\gamma}(\gamma) \, d\gamma.
\end{align}
Using the change of variables $t = \sqrt{\gamma}$, so $\gamma = t^2$, $d\gamma = 2t \, dt$, we get
\begin{align}\label{eqn_13}
     \text{BER} &= \int_0^{\infty} \frac{\text{erfc}\left( \frac{t}{2} \right)}{\sqrt{2\pi N \sigma_{\mathcal{G}}^2 \bar{\gamma}}} \left[ \exp\left( -\frac{\left( \frac{t}{\sqrt{\bar{\gamma}}} - N \mu_1 \mu_2 \right)^2}{2 N \sigma_{\mathcal{G}}^2} \right) \right.\nonumber\\
    &+\left. \exp\left( -\frac{\left( \frac{t}{\sqrt{\bar{\gamma}}} + N \mu_1 \mu_2 \right)^2}{2 N \sigma_{\mathcal{G}}^2} \right) \right] \, dt.
\end{align}

\subsection{Outage Probability}
The outage probability is now be given as
\begin{align}\label{eqn_14}
    P_{\text{out}} &= P_r(\gamma \leq \gamma_{th}) = \int_0^{\gamma_{th}} f_{\gamma}(\gamma) \, d\gamma\nonumber\\
    &=\int_0^{\sqrt{\frac{\gamma_{th}}{\bar{\gamma}}}} \frac{2}{\sqrt{2\pi N \sigma_{\mathcal{G}}^2}} \left[ \exp\left( -\frac{(t - N \mu_1 \mu_2)^2}{2 N \sigma_{\mathcal{G}}^2} \right) \right.\nonumber\\
    &+\left. \exp\left( -\frac{(t + N \mu_1 \mu_2)^2}{2 N \sigma_{\mathcal{G}}^2} \right) \right] \, dt.
\end{align}
\subsection{Channel Capacity}

Using \eqref{eqn_12}, the channel capacity can be expressed as
\begin{align}\label{eqn_15}
    C &= \frac{1}{\ln 2} \int_0^{\infty} \ln(1 + \gamma) f_{\gamma}(\gamma) \, d\gamma\nonumber\\
    &= \frac{1}{\ln 2} \int_0^{\infty} \ln(1 + t^2 \bar{\gamma}) \nonumber\\
    &\times \frac{2}{\sqrt{2\pi N \sigma_{\mathcal{G}}^2}} \left[ \exp\left( -\frac{(t - N \mu_1 \mu_2)^2}{2 N \sigma_{\mathcal{G}}^2} \right) \right.\nonumber\\
    &+\left. \exp\left( -\frac{(t + N \mu_1 \mu_2)^2}{2 N \sigma_{\mathcal{G}}^2} \right) \right] \, dt.
\end{align}
\subsection{Remarks}

\textit{The non-central chi-squared distribution of $\mathcal{G}^2$ resulted in complex integrals for BER, outage probability, and channel capacity. Therefore, Monte Carlo simulations or numerical methods are recommended for their evaluation. However, for large $N$, the Central Limit Theorem (CLT) provides a Gaussian approximation of $\mathcal{G}$, which simplifies portions of the analysis.}
\section{Analysis of Handover Process}

In this section, we put forward a new algorithm (Algorithm~\ref{algo}) for maneuvering HO within the network configuration as explained in Figure~\ref{combined}(a), that takes into account multiple HO types, including hard HO (HHO), soft HO (SHO), RIS-aided cell breathing (RIS-CB), and RIS-aided Ping-Pong avoidance (RIS-PP). In order to provide seamless connectivity, the algorithm employs NF-RIS, for which the HO decision process utilizes the BER parameter to select the optimal communication link prior to the potential link failure, as is supposed to happen in the FF scenario, that too without the assistance of RIS. By calculating HO probabilities for both hard and soft HO, the algorithm emphasizes the edge of NF-RIS-assisted networks over traditional non-RIS-based HO approaches. We consider a UE positioned at $P_{UE} = (x, y, z)$, initially connected to a serving BS through link $l_s$ in the FF, maintaining a BER below a QoS threshold of $10^{-3}$. The link is periodically assessed to ensure compliance with QoS requirements. When the BER from the serving BS exceeds this threshold, the UE scans for alternative BSs (e.g., a target BS) to enable a connectivity switch upon entering the Rayleigh distance of the target. We categorize links as either direct (LoS) or non-direct (NLoS). The UE might receive the channel links from so many nearby BS\footnote{Though in our case we will restrict to one target BS. However, in reality there might be many target BSs in the vicinity of the UE.}, therefore the set of direct links available to the UE at $P_{UE}$ can be defined as:
\begin{align}\label{eqn_16}
    \mathcal{F}_{dir}^{P_{UE}} = \{ p_{d,1}, p_{d,2}, \dots, p_{d,i}, \dots, p_{d,N} \},
\end{align}
where the BER of each connection surpasses the sensitivity threshold of the UE,  after being calculated at the UE.
\begin{algorithm}
\caption{Algorithm for the NF-RIS-aided HO management}
\label{algo}
\begin{algorithmic}[1]
    \Require
        \State $P_{UE} = (x, y, z)$ \Comment{Location of the UE }
        \State $\mathcal{F}_{dir} = \{p_{d,1}, \ldots, p_{d,i}, \ldots, p_{d,N}\} $ \Comment{The set of direct links existing}
        \State $\mathcal{F}_{nodir} = \{p_{nd,1}, \ldots, p_{nd,j}, \ldots, p_{nd,M}\} $ \Comment{The set of non-direct links existing}
        \State $ T_{hh} $ \Comment{The threshold value of HHO}
        \State $ T_{hs} $, where $ T_{hs} < T_{hh} $ \Comment{The threshold value of SHO}
        \State $ p_s \in \mathcal{F}_{dir} \cup \mathcal{F}_{nodir} $ \Comment{The link for the serving BS}
        \State $ \rho_p $ \Comment{The active connections on $ p_s $}
        \State $ \lambda $ \Comment{The threshold value for $ \rho_p $}
        \State $ \epsilon $ \Comment{ BER margin}
    \Ensure
        \State $ h \in \{HHO, SHO, RIS-CB, RIS-PP\} $ \Comment{Handover execution}
    \Procedure{Handover Execution}{}
        \If{$ BER_{p_s} \geq T_{hh} $}
            \For{$ i = 1 $ to $ N $}
                \If{$ BER_{p_{i}} < BER_{p_{s}} $ and $ BER_{p_{i}} \leq T_{hh} $}
                    \State $ h \gets HHO $ \Comment{Execute hard handover}
                \EndIf
            \EndFor
            \For{$ i = 1 $ to $ N $}
                \If{$ BER_{d,i} < Th_s $}
                    \State $ h \gets SHO $ \Comment{Execute soft handover}
                \EndIf
            \EndFor
        \EndIf
        \If{$ T_{hh} - \epsilon < BER_{p_{s}} \leq T_{hh} + \epsilon $}
            \For{$ j = 1 $ to $ M $}
                \If{$ BER_{p_{j}} < T_{hh} $}
                    \State $ h \gets RIS-PP $ \Comment{Perform RIS-aided PP avoidance handover}
                \EndIf
            \EndFor
        \EndIf
        \If{$ \rho_{p_{s}} > \lambda $}
            \For{$ j = 1 $ to $ M $}
                \If{$ BER_{p_{j}} < T_{hh} $}
                    \State $ h \gets RIS-CB $ \Comment{Perform RIS-aided CB HO}
                \EndIf
            \EndFor
        \EndIf
    \EndProcedure
\end{algorithmic}
\end{algorithm}

When the performance of the serving link $ p_s $ degrades due to decreasing signal strength in the FF, the algorithm concentrates on maintaining connectivity with the serving BS by switching to the NF-RIS-aided route. This approach eliminates the need for additional UE authentication and association overhead, as the serving BS remains unchanged, with only the propagation path modified. The set of available NLoS links via RIS is defined as:
\begin{align}\label{eqn_17}
    \mathcal{F}_{nodir}^{P_{UE}} = \{ p_{nd,1}, p_{nd,2}, \dots, p_{nd,j}, \dots, p_{nd,M} \}.
\end{align}
The probability of initiating a HHO from the serving BS (via $ p_s $) to a target BS through the $ i $-th direct link in $ \mathcal{F}_{dir}^{P_{UE}}$ is measured based on average BER, that is given as
\begin{align}\label{eqn_18}
    P_{HHO, p_i} &= \Pr \left(  \text{BER}_{p_s} \geq T_{\text{hh}} \bigcup  \text{BER}_{p_i} <  \text{BER}_{p_s} \right),
\end{align}
where $\text{BER}_{p_s}$ and $\text{BER}_{p_i}$ are the BERs of the current link to the serving BS via path $p_s$ and the potential link to the target BS via path $p_i$, respectively. While $T_{\text{hh}}$ is the predefined BER threshold for triggering a hard HO, usually considered to be $10^{-3}$. The operator $\bigcup$ indicates that the HHO is initiated if either of the condition or both is satisfied. The first condition, i.e., $\text{BER}_{p_s} \geq T_{\text{hh}}$, states that the serving link’s quality has degraded beyond an acceptable QoS threshold, while the second condition $\text{BER}_{p_i} < \text{BER}_{p_s}$, ensures that the target BS offers superior link quality. Therefore, using the principle of inclusion-exclusion, \eqref{eqn_18} can be written as
\begin{align}\label{eqn_19}
    P_{HHO, p_i} &=\Pr \left( \text{BER}_{p_s} \geq T_{\text{hh}} \right) + \Pr \left( \text{BER}_{p_i} < \text{BER}_{p_s} \right) \nonumber\\ 
    &- \Pr \left( \text{BER}_{p_s} \geq T_{\text{hh}}, \text{BER}_{p_i} < \text{BER}_{p_s} \right).
\end{align}
Once the $ P_{HHO, p_i}$ is formulated, the HHO is then implemented towards the link with the highest probability, which is given as
\begin{align}\label{eqn_20}
    \max_{i \in \mathcal{F}_{dir}^{P_{UE}}} P_{HHO, \mathcal{F}_{P_{UE}}^{dir}}.
\end{align}
Similarly, for the case of SHO, the probability relates to the serving BS to the target BS via the $ i $-th direct link with $T_{\text{hs}}$ being the threshold of the BER for the SHO, can be defined as
\begin{align}\label{eqn_21}
    P_{SHO, p_i} &= \Pr \left( T_{hs} \leq BER_{p_s} < T_{hh} \bigcup BER_{p_i} < T_{hs} \right)\\
    &= \Pr \left( T_{\text{hs}} \leq \text{BER}_{p_s} < T_{\text{hh}} \right) + \Pr \left( \text{BER}_{p_i} < T_{\text{hs}} \right) \nonumber\\
    &- \Pr \left( T_{\text{hs}} \leq \text{BER}_{p_s} < T_{\text{hh}}, \text{BER}_{p_i} < T_{\text{hs}} \right).
\end{align}
As explained for the HHO case, the first condition here i.e., $T_{\text{hs}} \leq \text{BER}_{p_s} < T_{\text{hh}}$, indicates that the link quality of the serving BS is degraded but remains below the HHO threshold, while the second condition, i.e., $\text{BER}_{p_i} < T_{\text{hs}}$, ensures that the target BS provides a high-quality link once the UE moves away from both the vicinity region of the serving BS as well as NF-RIS. Here, the link from the serving BS is preserved, and the UE receives data from both the BSs. The soft HO is then carried out towards the link with the highest probability, which is given as
\begin{align}\label{eqn_22}
    \max_{i \in \mathcal{F}_{dir}^{P_{UE}}} P_{SHO, \mathcal{F}_{dir}^{P_{UE}}}.
\end{align}
Let us now look into the cases where we can leverage the benefits of NF-RIS-aided in terms of HO. For this, we extend the formulated probabilities to include both $\mathcal{F}_{dir}$ and $\mathcal{F}_{nodir}$. By utilizing the RIS in NF to extend the Fresnel region or in other words we can say that, to extend the coverage of the original serving BS, the algorithm maintains the active connection to the serving BS while optimizing the propagation path. The effectiveness of the proposed method, where the NF-RIS improves the HO, we will consider two cases i.e., cell breathing (CB) and \ac{HOPP} reduction. The HO probability for the CB as well as the for the PP avoidance for the $ j $-th NLoS link are given as
\begin{align}\label{eqn_23}
    &P_{RIS, p_{j}}^{CB} = \Pr \left( \rho_{p_s} > \lambda \bigcup BER_{p_{j}} < T_{hh} \right), and\\ 
    & P_{RIS,p_{j}}^{PP} = \Pr \left( (T_{hh} - \epsilon < BER_{p_s} < T_{hh} + \epsilon) \right.\nonumber\\
    &\bigcup\left.(T_{hh} - \epsilon < BER_{p_{j}} < T_{hh} + \epsilon) \right.\nonumber\\
    &\bigcup\left. (BER_{p_{j}} < T_{hh}) \right),
\end{align}

where $ \rho_{p_s} $ represents the number of active connections on the serving BS, $\epsilon $ is a small margin, and $ \lambda $ is the threshold triggering the CB mechanism. 

When the value of the traffic load goes past the threshold $\lambda$, and the BER of the $j$-th RIS-aided link falls below a predefined threshold $T_{hh}$, then a HO is triggered to that link, that occurs while maintaining the serving BS, while redirecting the signal through the RIS-aided channel link. The selection of the optimal path is determined by maximizing HO probability in the set $\mathcal{F}_{P_{UE}}^{\text{nodir}}$. Furthermore, in the case of HO-PP, the transition from the serving direct link with probability $p_s$ to the $j$-th RIS-aided link is facilitated if the BERs of both the direct and RIS-aided channel paths fall within the range $ T_{hh} \pm \epsilon $, and the BER of the $ j $-th link is still below $T_{hh}$. The HO is then executed to the link with the highest probability within the set $\mathcal{F}_{P_{UE}}^{\text{nodir}}$, ensuring that the serving BS is preserved while optimizing the RIS-aided channel path.
In addition to it, it is crucial to note that in the case of NF-RIS-aided PP avoidance HO probability, the serving BS remains constant throughout the RIS-aided link, that forms a key aspect of the RIS-aided HO procedure, which aims to preserve the serving BS while redirecting the signal propagation via alternative links facilitated by the RIS.
\section{Numerical results}
\begin{figure}[t!]
        \centering
        \includegraphics[width=\linewidth]{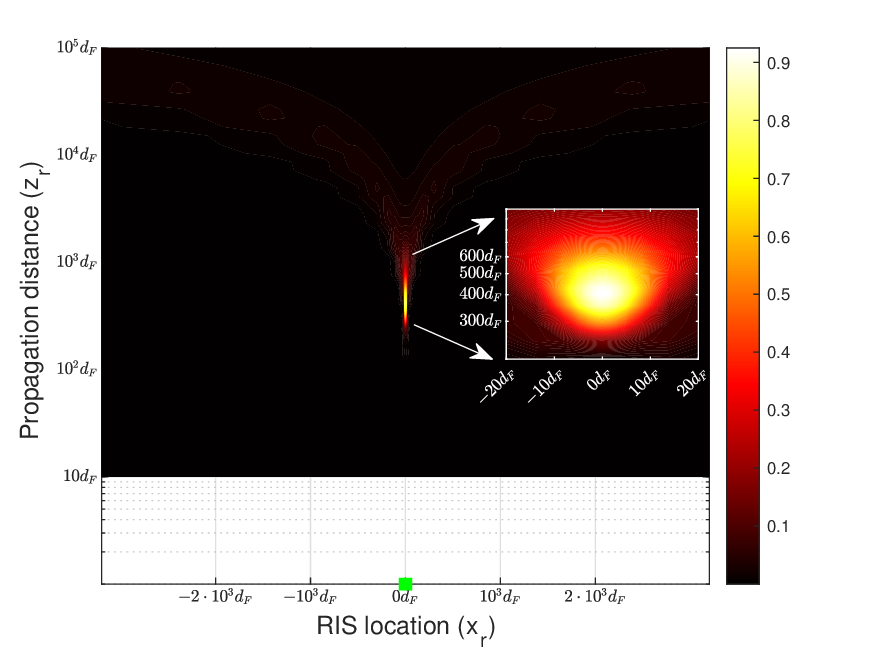}
    \caption{Signal beam directed towards the UE by the NF-RIS.}
   \label{NF beam}
\end{figure}
Figure~\ref{NF beam} is a visual representation of how the RIS gains can focus the signal beam precisely to a specified direction. The heatmap is obtained after the RIS is placed at the location i.e., $0 d_{F}$ in the $x_{r}$ direction, whereas the UE is located somewhere in the \(x\text{-}z\)-plane.  The incident signal is directed towards the UE placement site by the RIS phase-shifts. According to the zoomed map, the focal plane's 3 dB bandwidth along the $x_r$-axis and the $z_r$-axis's 3 dB beam depth are achieved. These results are extremely narrow, suggesting that the UE has more coverage to delay the HO process than in the FF scenario.

\begin{figure}[t!]
        \centering
        \includegraphics[width=\linewidth]{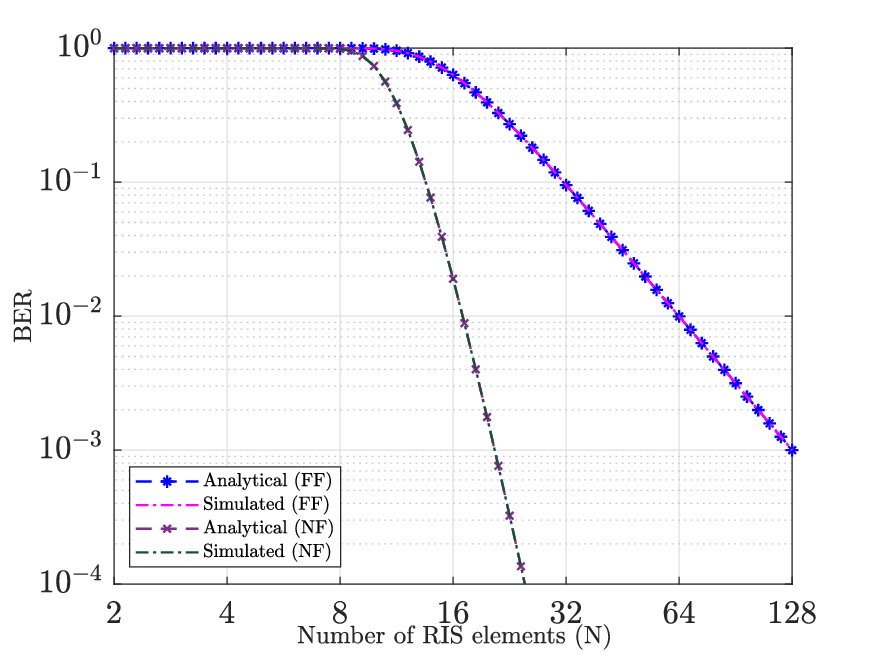}
    \caption{BER versus number of RIS-elements.}
   \label{BER}
\end{figure}

In what follows, we investigate the three important performance indicators (i.e., BER, OP, and channel capacity) of the UE under NF and FF environments. The performance trends are depicted graphically in Figures~\ref{BER}, \ref{OP} and \ref{Cap}, where the analysis has been conducted. As we can see from these figures, a significant improvement in all the KPIs is observed as the number of RIS elements increases, especially in the NF case. This gain is due to the enhanced focusing capability of RIS, which can make the signal beam optimal. Moreover, a close correlation can be observed between the analytical models and simulation results, which is indicated by the coinciding lines overlaid onto the data. This coincidence confirms the trustworthiness of the theoretical model used in the prediction of system's responses. The enhanced performance in the NF area means that the signal beam generated by RIS has enough robustness to expand the effective coverage region for the UE to a distance level. This extended horizon is an important factor in the delay of the HO process towards the target BS and indeed the unwanted inter-cell HOs occurrence, and ultimately reducing the risk of PP.

\begin{figure}[t!]
        \centering
        \includegraphics[width=\linewidth]{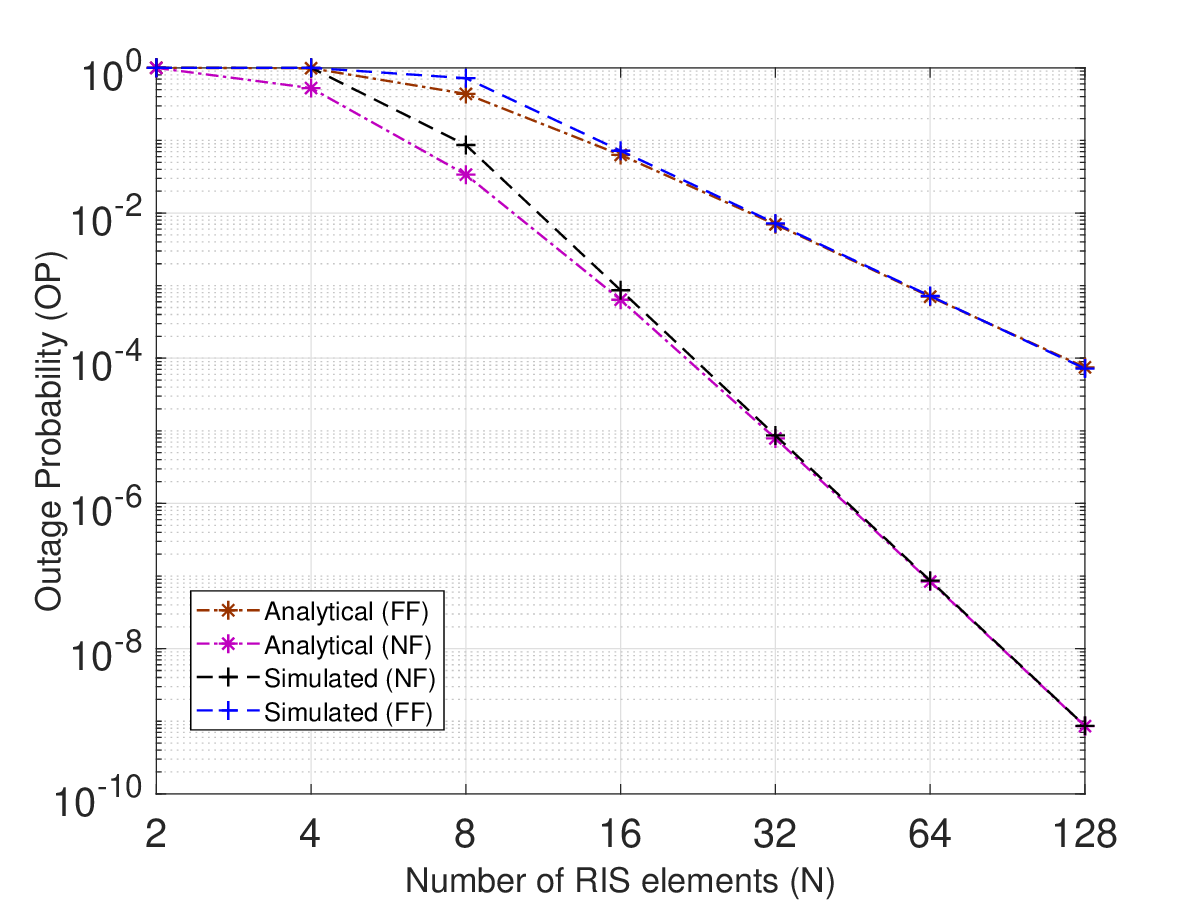}
    \caption{Outage Probability versus number of RIS-elements.}
   \label{OP}
\end{figure}
\begin{figure}[t!]
        \centering
        \includegraphics[width=\linewidth]{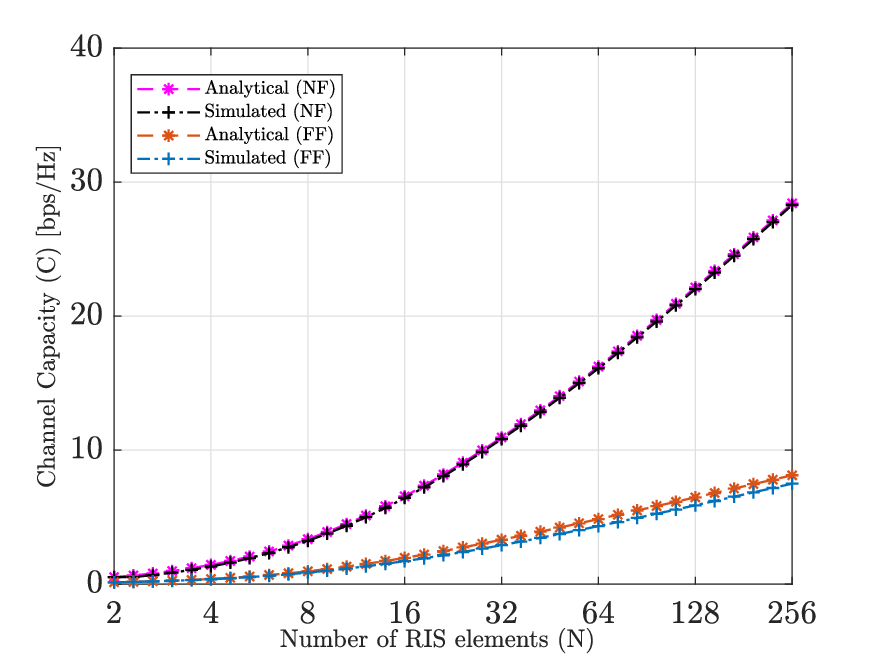}
    \caption{Channel capacity versus number of RIS-elements.}
   \label{Cap}
\end{figure}
\begin{figure}[t!]
        \centering
        \includegraphics[width=\linewidth]{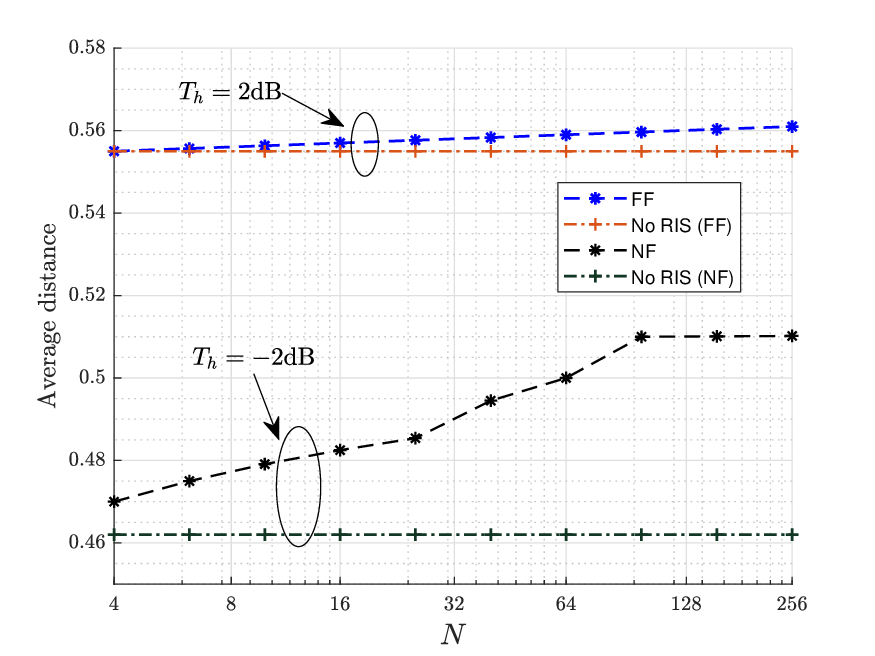}
    \caption{Average distance from the starting point to the point where the HO is triggered under different HO margin versus number of RIS-elements for D=60 m.}
   \label{dis20}
\end{figure}
\begin{figure}[t!]
        \centering
        \includegraphics[width=\linewidth]{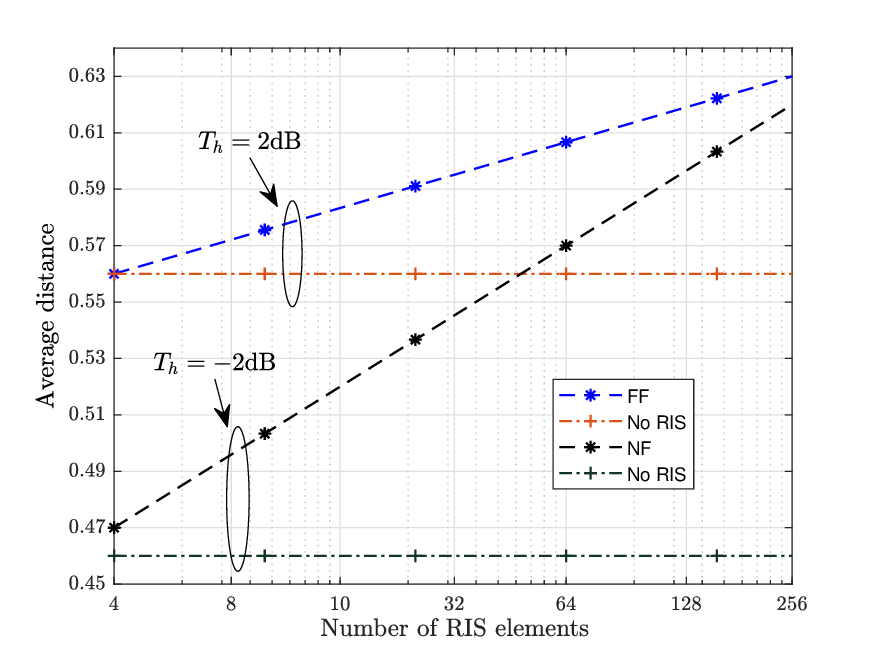}
     \caption{Average distance from the starting point to the point where the HO is triggered under different HO margin versus number of RIS-elements for D=150 m.}
   \label{dis50}
\end{figure}

After the validation of the virtual NF signal beam performance at UE, the next issue to consider is HO process. This analysis is explained by considering the plots shown in Figure~\ref{dis20} and Figure~\ref{dis50}, which show the impact of the number of RIS elements on the HO triggers positions. These numbers take into account different serving distance of RIS and HO margins, which provides an in-depth view of the dynamic change in HO policies. The HO procedure largely depends on the threshold $T_{h}$, which is the HO threshold to switch to a new BS based on the measurement report of UE. When $T_{h} < 1$, the HO trigger positions tend to move closer to the serving BS. This phenomenon highlights a crucial dependence on the threshold choice, and the RIS is more influential on the locations of HO trigger. To illustrate this impact, we assume that the number of RIS elements is $N = 100$ and the serving distance is $D = 150$ m, as shown in Figure~\ref{dis50}. In this setup, for an HO triggering threshold $T_h =-2$dB, the average distance to the HO triggering remains similar on average with about 26\% more movements compare with the reference without the RIS. This distance gain is due to the signal concentration and range expansion properties of RIS. We still have an increase in average distance, but such an increment is much smaller as compared with the case of $T_h = 2$ dB which is slightly upper by approximately 7\%, which implies the influence of the RIS decrease when the threshold value is high. Additionally, when $T_h = 2$dB, the introduction of RIS does not cause large changes of the HO position, which shows a saturation phenomenon when the threshold condition is good. It is also more complex form that the HO trigger position is not continuously highly variable when increasing numbers of RIS elements because $N$ goes past a certain value. This saturation is a function of the serving distance $D$ and the threshold $T_h$. This behavior implies that adding more RIS elements beyond the critical numbers is less effective, and the saturation threshold increases with the serving distance and the threshold $T_h$. 
\begin{figure}[t!]
        \centering
        \includegraphics[width=\linewidth]{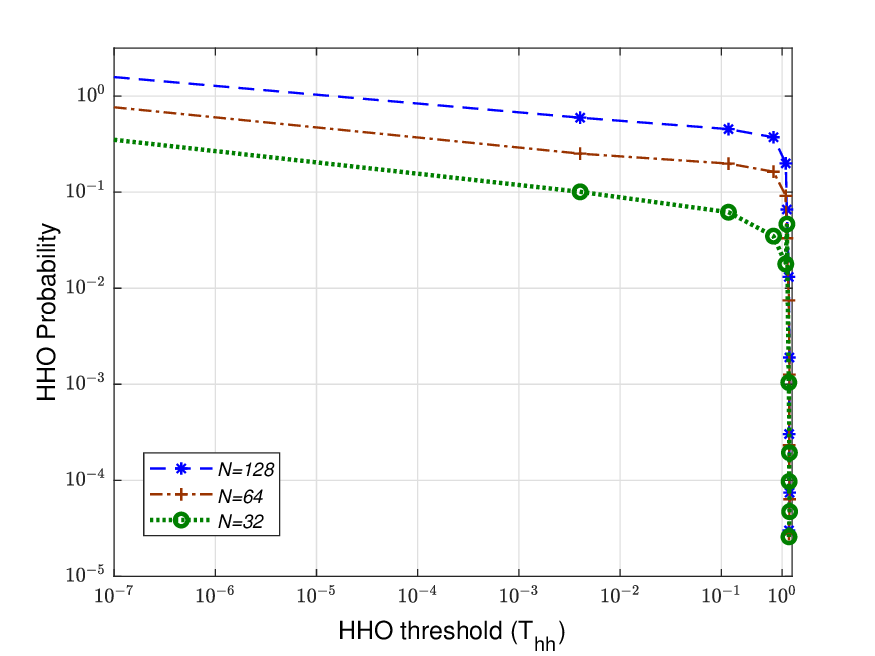}
     \caption{HHO probability versus HO threshold for varied RIS elements in NF scenario.}
   \label{HHO}
\end{figure}

The effect of changing the number of RIS on the HO probability, especially the case of a HO from a serving BS to a target BS within the virtual Fresnel zone formed by NF RIS is illustrated with the aid of Figure~\ref{HHO}. This figure shows the HO probability versus the number of RIS elements, represented by $N$, and offers valuable information about the HO performance of RIS-assisted communication links. In Figure~\ref{HHO}, we have plotted the HHO probability under different configurations of RIS elements. It is observed from the results that the HHO probability is maximized as UE moves from the serving BS to the target BS via a reflected UE link with an increasing $\mathcal{N}$. More specifically, it can be observed that the HO probability does not significantly change for different $N$ values. This shows that the additional improvement in HO performance becomes marginal when the number of RIS elements is very high. That is, as $N$ grows, the RIS-aided link performs better, but the gain of further increasing the number of elements diminishes after a certain threshold. Numerically, the gain of the RIS-aided link with larger number of elements (e.g., $N= 128$) is higher compared to that with smaller number of elements (e.g., $N = 32$), justifying the use case of increasing the number of RIS elements. The augmented RIS elements improves the performance of the beamforming and reﬂection of RIS, which in return elevates the signal quality and the link reliability towards the target BS. Therefore, a larger $N$ leads to a stronger virtual Fresnel zone which is beneficial to the HO process. A high HHO probability implies that the HO decision can be made more easily, indicating that the RIS-aided link has a more stable and stronger connection to the target BS. On the contrary, if the probability of HHO is low, the HO provides few advantages. In fact, the increase in the link quality may not reward the transition. Hence, the RIS element number optimization is of great importance for RIS-assisted wireless networks to maximize the performance of HO, especially in the near field, where signal propagation is impacted by the virtual Fresnel zone.

\begin{figure}[t!]
        \centering
        \includegraphics[width=\linewidth]{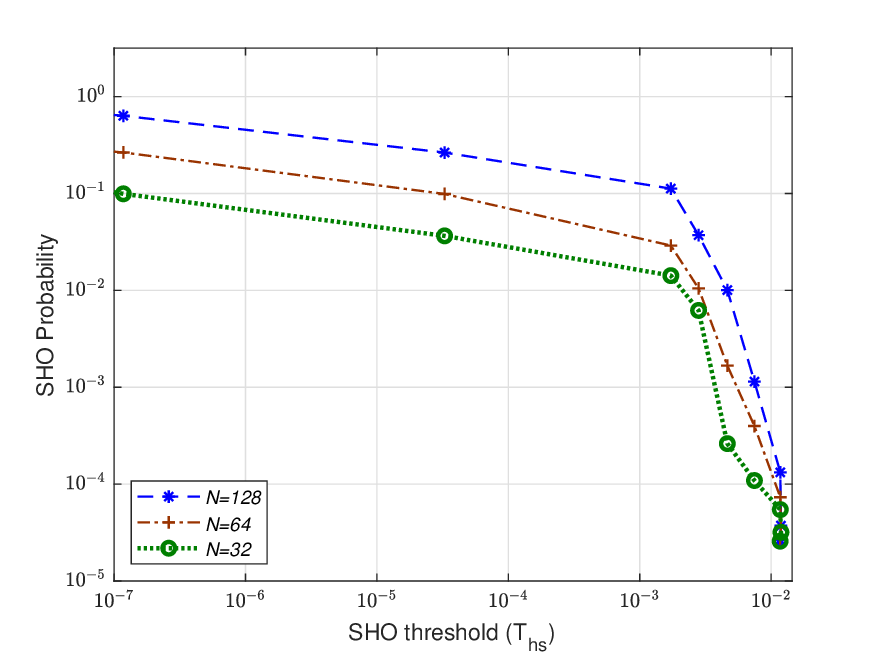}
     \caption{SHO probability versus HO threshold for varied RIS elements in NF scenario.}
   \label{SHO}
\end{figure}

The soft HO probability, which is illustrated in Figure~\ref{SHO}, is considered for the RIS-aided virtual NF scenario. The value of $T_{hs}$, the soft HO threshold, varies from $10^{-7}$ upto $10^{-2}$. This is a narrower range as compared to the hard HO threshold ($T_{hh}$) ($T_{hs} < T_{hh}$). Thus, the likelihood of performing a soft HO is lower than that of a hard HO and is limited by the operational limits of the soft HO threshold. As shown in Figure~\ref{SHO}, the highest SHO probabilities are experienced when the RIS configuration has a size of $\mathcal{N} = 128$. Consequently, lower probabilities are found for HO that are related to $\mathcal{N} = 32$. This is evolution of the beamforming in the virtual (NF) environment assisted by RIS. In particular, beamforming connections with smaller RIS, e.g., $\mathcal{N} = 32$, suffer from performance degradation compared to the ones with larger RIS, e.g., $\mathcal{N} = 128$. The better beamforming gain of large RIS configuration is beneficial to the link performance, which mitigates the requirement of HO and then decreases the corresponding soft HO probability. When the size of the RIS is large, the baseline performance of the beamforming link is strong enough. The marginal benefits of a soft HO are therefore small, leading to a low SHO probability. On the other hand, as for the links with less RIS elements, the performance deterioration becomes more obvious, resulting in the more number of SHOs to satisfy the quality of service. Note that the bias toward SHO in the case of larger number of RIS elements is clear, since the higher RIS enabled sessions result in more dependable beamforming connections. The SHO probability is also affected by the bit error rate (BER) of serving link. When the BER is above the threshold range of $[T_{hs}, T_{hh}]$, and especially when $BER < T_{hs}$, the serving link performance is high so that SHO is hardly necessary. This effect is more significant in $\mathcal{N} = 32$ and $\mathcal{N} = 64$ configurations with lower SHO probabilities due to the better BER performance of these links. In these circumstances the system chooses to maintain the nomadic link rather than to set up a HO, which has the additional effect of reducing the SHO probability.
\section{Conclusion and future directions}
This paper proposed an efficient HO management approach leveraging the virtual fresnel region created by RIS to enhance the coverage as well as providing an efficient way for optimal HO in 6G networks. By leveraging RIS to extend coverage and optimize signal paths, the proposed approach significantly reduces signaling overhead, improves energy efficiency, and minimizes unnecessary HOs. The algorithm evaluates the link quality harnessing the error performance, ensuring the seamless mobility with the extended coverage and eventually having an optimal HO, reducing the chances of \acp{HOPP}.

Future directions will focus on implementing various HO features, including low-layer-triggered mobility, to further reduce HO interruption time. Moreover, reducing energy consumption by developing more efficient HO algorithms remains a key requirement for future mobile networks.

\section*{Appendix A}
Considering the Fraunhofer distance from the serving BS to be $d_{\text{F}} = \frac{2D^2}{\lambda}$, and the Fresnel region as $ d_F \geq 1.2D $, the normalized antenna gain can be calculated using numerical integration from equation \eqref{eqn_2}, expressed as
\begin{align*}\label{eqn_A.1}
G_{\text{antenna}}&= \frac{E_{0}^{2}|h|^{2}/\eta}{\int\limits_A \left| E(x, y) \right|^2 \, dx \, dy/\eta}\nonumber\\
&= \frac{E_{0}^{2}\left|\frac{1}{E_0} \sqrt{\frac{2}{D^2}} \int\limits_A E(x, y) \, dx \, dy\right|^{2}/\eta}{\int\limits_A \left| E(x, y) \right|^2 \, dx \, dy/\eta}\nonumber\\
&=\frac{\left| \int\limits_A E(x, y) \, dx \, dy \right|^2}{\left( \frac{D^2}{2} \right) \int\limits_A \left| E(x, y) \right|^2 \, dx \, dy}. \tag{A.1}
\end{align*}
Subsequently, to account for phase variations, we incorporate a phase-shift term into the integrand of \eqref{eqn_A.1}, which is modeled as $ e^{+j \frac{2\pi}{\lambda} \left( \frac{x^2}{2F} + \frac{y^2}{2F} + \rho \sin(\theta_i) x \right)} $ and then integrated for the whole RIS. Additionally, to generalize the expression, we replace the term $ \frac{2}{D^2} $ with $ \frac{2}{D^2 N} $ to generate a continuous representation of the phase shifts induced by the RIS, as presented in \eqref{prop1_eq3}.

\printbibliography 

@article{jiao2021enabling,
  title={Enabling efficient blockage-aware handover in RIS-assisted mmWave cellular networks},
  author={Jiao, Long and Wang, Pu and Alipour-Fanid, Amir and Zeng, Huacheng and Zeng, Kai},
  journal={IEEE Transactions on Wireless Communications},
  volume={21},
  number={4},
  pages={2243--2257},
  year={2021},
  publisher={IEEE}
}

@article{wang2024wideband,
  title={Wideband beamforming for RIS assisted near-field communications},
  author={Wang, Ji and Xiao, Jian and Zou, Yixuan and Xie, Wenwu and Liu, Yuanwei},
  journal={IEEE Transactions on Wireless Communications},
  year={2024},
  publisher={IEEE}
}

@inproceedings{zhou2024near,
  title={Near-Field Multi-Beam Design for Extremely Large-Scale RIS Assisted by Multi-Start Adam Algorithm},
  author={Zhou, Kang and Zhou, Weixi and Xiao, Sa and Meng, Linggang},
  booktitle={Proceedings of the 2024 12th International Conference on Communications and Broadband Networking},
  pages={70--75},
  year={2024}
}

@article{wei2025analysis,
  title={Analysis of intelligent reflecting surface-enhanced mobility through a line-of-sight state transition model},
  author={Wei, Haoyan and Zhang, Hongtao},
  journal={IEEE Transactions on Vehicular Technology},
  year={2025},
  publisher={IEEE}
}

@article{wei2025handover,
  title={Handover Optimization for IRS-Aided Networks With Experimental Verification},
  author={Wei, Haoyan and Zhang, Hongtao},
  journal={IEEE Transactions on Wireless Communications},
  year={2025},
  publisher={IEEE}
}

@article{sun2021multi,
  title={A multi-attribute handover algorithm for QoS enhancement in ultra dense network},
  author={Sun, Kai and Yu, Jiarun and Huang, Wei and Zhang, Haijun and Leung, Victor CM},
  journal={IEEE Transactions on Vehicular Technology},
  volume={70},
  number={5},
  pages={4557--4568},
  year={2021},
  publisher={IEEE}
}

@article{gannapathy2023smart,
  title={{A smart handover strategy for 5G mmWave dual connectivity networks}},
  author={Gannapathy, Vigneswara Rao and Nordin, Rosdiadee and Abdullah, Nor Fadzilah and Abu-Samah, Asma},
  journal={IEEE Access},
  volume={11},
  pages={134739--134759},
  year={2023},
  publisher={IEEE}
}

@article{bensalem2024signaling,
  title={Signaling Rate and Performance of {RIS} Reconfiguration and Handover Management in Next Generation Mobile Networks},
  author={Bensalem, Mounir and Jukan, Admela},
  journal={arXiv preprint arXiv:2407.18183},
  year={2024}
}

@inproceedings{adnan2024performance,
  title={Performance Evaluation of {IRS}-Assisted Intra-cell Handover in Vision-Aided mm{W}ave Networks},
  author={Adnan, Alaa and Al-Quraan, Mohammad and Zoha, Ahmed and Imran, Muhammad Ali and Mohjazi, Lina},
  booktitle={2024 IEEE Wireless Communications and Networking Conference (WCNC)},
  pages={1--6},
  year={2024},
  organization={IEEE}
}

@inproceedings{gao2024intelligent,
  title={Intelligent Reflect Surface Assisted {TOPSIS}-based Communication Handover Strategy},
  author={Gao, YaWen and Li, Bo},
  booktitle={2024 7th International Conference on Electronics Technology (ICET)},
  pages={790--795},
  year={2024},
  organization={IEEE}
}

@article{bassoli2021we,
  title={Why do we need {6G}?},
  author={Bassoli, Riccardo and Fitzek, FH and Strinati, E Calvanese},
  journal={ITU Journal on Future and Evolving Technologies},
  volume={2},
  number={6},
  pages={1--31},
  year={2021},
  publisher={Wireless communication systems in beyond 5G era}
}

@article{renzo2019smart,
  title={Smart radio environments empowered by reconfigurable {AI} meta-surfaces: An idea whose time has come},
  author={Renzo, Marco Di and Debbah, Merouane and Phan-Huy, Dinh-Thuy and Zappone, Alessio and Alouini, Mohamed-Slim and Yuen, Chau and Sciancalepore, Vincenzo and Alexandropoulos, George C and Hoydis, Jakob and Gacanin, Haris and others},
  journal={{EURASIP} Journal on Wireless Communications and Networking},
  volume={2019},
  number={1},
  pages={1--20},
  year={2019},
  publisher={Springer}
}

@article{wu2019towards,
  title={Towards smart and reconfigurable environment: Intelligent reflecting surface aided wireless network},
  author={Wu, Qingqing and Zhang, Rui},
  journal={IEEE communications magazine},
  volume={58},
  number={1},
  pages={106--112},
  year={2019},
  publisher={IEEE}
}

@article{yu2021smart,
  title={Smart and reconfigurable wireless communications: From {IRS} modeling to algorithm design},
  author={Yu, Xianghao and Jamali, Vahid and Xu, Dongfang and Ng, Derrick Wing Kwan and Schober, Robert},
  journal={IEEE Wireless Communications},
  volume={28},
  number={6},
  pages={118--125},
  year={2021},
  publisher={IEEE}
}

@article{liao2023optimized,
  title={Optimized design for {IRS}-assisted integrated sensing and communication systems in clutter environments},
  author={Liao, Chikun and Wang, Feng and Lau, Vincent KN},
  journal={IEEE Transactions on Communications},
  volume={71},
  number={8},
  pages={4721--4734},
  year={2023},
  publisher={IEEE}
}

@article{tashan2024optimal,
  title={Optimal handover optimization in future mobile heterogeneous network using integrated weighted and fuzzy logic models},
  author={Tashan, Waheeb and Shayea, Ibraheem and Aldirmaz-Colak, Sultan and El-Saleh, Ayman A and Arslan, H{\"u}seyin},
  journal={IEEE Access},
  year={2024},
  publisher={IEEE}
}

@article{cui2022near,
  title={Near-field rainbow: Wideband beam training for {XL-MIMO}},
  author={Cui, Mingyao and Dai, Linglong and Wang, Zhaocheng and Zhou, Shidong and Ge, Ning},
  journal={IEEE Transactions on Wireless Communications},
  volume={22},
  number={6},
  pages={3899--3912},
  year={2022},
  publisher={IEEE}
}

@article{zhi2024performance,
  title={Performance analysis and low-complexity design for {XL-MIMO} with near-field spatial non-stationarities},
  author={Zhi, Kangda and Pan, Cunhua and Ren, Hong and Chai, Kok Keong and Wang, Cheng-Xiang and Schober, Robert and You, Xiaohu},
  journal={IEEE Journal on Selected Areas in Communications},
  year={2024},
  publisher={IEEE}
}

@article{shen2023multi,
  title={Multi-beam design for near-field extremely large-scale {RIS}-aided wireless communications},
  author={Shen, Decai and Dai, Linglong and Su, Xin and Suo, Shiqiang},
  journal={IEEE Transactions on Green Communications and Networking},
  volume={7},
  number={3},
  pages={1542--1553},
  year={2023},
  publisher={IEEE}
}

@article{liu2022deep,
  title={Deep learning based beam training for extremely large-scale massive {MIMO} in near-field domain},
  author={Liu, Wang and Ren, Hong and Pan, Cunhua and Wang, Jiangzhou},
  journal={IEEE Communications Letters},
  volume={27},
  number={1},
  pages={170--174},
  year={2022},
  publisher={IEEE}
}

@article{peng2024deep,
  title={Deep Learning-Based CSI Feedback for {XL-MIMO} Systems in the Near-Field Domain},
  author={Peng, Zhangjie and Liu, Ruijing and Li, Zhaotian and Pan, Cunhua and Wang, Jiangzhou},
  journal={arXiv preprint arXiv:2405.09053},
  year={2024}
}

@article{han2020channel,
  title={Channel estimation for extremely large-scale massive {MIMO} systems},
  author={Han, Yu and Jin, Shi and Wen, Chao-Kai and Ma, Xiaoli},
  journal={IEEE Wireless Communications Letters},
  volume={9},
  number={5},
  pages={633--637},
  year={2020},
  publisher={IEEE}
}

@article{cui2022channel,
  title={Channel estimation for extremely large-scale {MIMO}: Far-field or near-field?},
  author={Cui, Mingyao and Dai, Linglong},
  journal={IEEE Transactions on Communications},
  volume={70},
  number={4},
  pages={2663--2677},
  year={2022},
  publisher={IEEE}
}

@INBOOK{9966212,
  author={Kazim, Jalil R. and Rains, James and Imran, Muhammad Ali and Abbasi, Qammer H.},
  booktitle={Intelligent Reconfigurable Surfaces ({IRS}) for Prospective {6G} Wireless Networks}, 
  title={Application and Future Direction of {RIS}}, 
  year={2023},
  volume={},
  number={},
  pages={171-188},
  keywords={Metasurfaces;Surface impedance;5G mobile communication;Millimeter wave communication;Transmitters;Surface waves;Low latency communication},
  doi={10.1002/9781119875284.ch9}
}

@ARTICLE{8926369,
  author={Tang, Fengxiao and Kawamoto, Yuichi and Kato, Nei and Liu, Jiajia},
  journal={Proceedings of the IEEE}, 
  title={Future Intelligent and Secure Vehicular Network Toward {6G}: Machine-Learning Approaches}, 
  year={2020},
  volume={108},
  number={2},
  pages={292-307},
  keywords={Vehicle dynamics;Resource management;Security;Array signal processing;Machine learning;OFDM;Vehicle-to-everything;Vehicular ad hoc networks;6G;deep learning;intelligent radio (IR);intelligentization;Internet of Vehicles (IoV);machine learning (ML);resource allocation;routing;security;space–air–ground;traffic control;vehicle-to-everything (V2X);vehicle-to-vehicle (V2V);vehicular network},
  doi={10.1109/JPROC.2019.2954595}
}

@ARTICLE{10041749,
  author={Loscrí, Valeria and Rizza, Carola and Benslimane, Abderrahim and Vegni, Anna Maria and Innocenti, Eros and Giuliano, Romeo},
  journal={IEEE Transactions on Vehicular Technology}, 
  title={{BEST-RIM}: A mmWave Beam Steering Approach Based on Computer Vision-Enhanced Reconfigurable Intelligent Metasurfaces}, 
  year={2023},
  volume={72},
  number={6},
  pages={7613-7626},
  keywords={Millimeter wave communication;Cameras;Receivers;Antenna radiation patterns;Interference;Beam steering;Optimization;Reconfigurable intelligent meta-surfaces;machine learning;computer vision;mmWave;beam steering},
  doi={10.1109/TVT.2023.3243358}
}

@ARTICLE{10380573,
  author={Alkaabi, Shaimaa R. and Gregory, Mark A. and Li, Shuo},
  journal={IEEE Access}, 
  title={Multi-Access Edge Computing Handover Strategies, Management, and Challenges: A Review}, 
  year={2024},
  volume={12},
  number={},
  pages={4660-4673},
  keywords={Handover;Servers;Computer architecture;Cloud computing;Multi-access edge computing;Quality of service;Protocols;Edge computing;Handover;multi-access edge computing;handover protocol;state relocation;edge network;cloud computing;application migration},
  doi={10.1109/ACCESS.2024.3349587}
}

@ARTICLE{10471522,
  author={Zhang, Bibo and Filippini, Ilario and Lian, Zhuxian and Su, Yinjie},
  journal={IEEE Communications Letters}, 
  title={Adaptive Obstacle-Aware {RIS} Switch for Mobile mm{W}ave Access Networks}, 
  year={2024},
  volume={28},
  number={5},
  pages={1216-1220},
  keywords={Reconfigurable intelligent surfaces;Switches;Millimeter wave communication;Signal to noise ratio;Real-time systems;Adaptive systems;Vectors;RIS-aided mmWave access networks;obstacles;user mobility;adaptive RIS switch},
  doi={10.1109/LCOMM.2024.3376713}}

@ARTICLE{8936989,
  author={Özdogan, Özgecan and Björnson, Emil and Larsson, Erik G.},
  journal={IEEE Wireless Communications Letters}, 
  title={Intelligent Reflecting Surfaces: Physics, Propagation, and Pathloss Modeling}, 
  year={2020},
  volume={9},
  number={5},
  pages={581-585},
  keywords={Surface waves;Optical surface waves;Surface impedance;Receivers;Rough surfaces;Surface roughness;Intelligent reflecting surface;pathloss model},
  doi={10.1109/LWC.2019.2960779}}

@ARTICLE{9003219,
  author={Banagar, Morteza and Chetlur, Vishnu Vardhan and Dhillon, Harpreet S.},
  journal={IEEE Wireless Communications Letters}, 
  title={Handover Probability in Drone Cellular Networks}, 
  year={2020},
  volume={9},
  number={7},
  pages={933-937},
  keywords={Handover;Satellite broadcasting;Cellular networks;Drones;Base stations;Indexes;Drone base station;handover probability;handover rate;stochastic geometry;mobility},
  doi={10.1109/LWC.2020.2974474}}

@ARTICLE{8839948,
  author={Joshi, Kishor Chandra and Hersyandika, Rizqi and Prasad, R. Venkatesha},
  journal={IEEE Systems Journal}, 
  title={Association, Blockage, and Handoffs in {IEEE} 802.11ad-Based 60-{GH}z Picocells—{A} Closer Look}, 
  year={2020},
  volume={14},
  number={2},
  pages={2144-2153},
  keywords={Antenna measurements;IEEE 802.11 Standard;Semiconductor device measurement;Throughput;Antenna arrays;Wireless LAN;Antenna radiation patterns;IEEE 802.11ad;5G;millimeter wave;WiGig},
  doi={10.1109/JSYST.2019.2937568}}

@INPROCEEDINGS{9297352,
  author={Okaf, Abdanaser and Qiu, Dongyu},
  booktitle={2020 International Symposium on Networks, Computers and Communications (ISNCC)}, 
  title={Analysis of Blockage Impact on Handover Rate for User with Mobility in {5G} mm-Wave Cellular Network}, 
  year={2020},
  volume={},
  number={},
  pages={1-6},
  keywords={5G mobile communication;Cellular networks;Buildings;Quality of service;Wireless communication;Loss measurement;Propagation losses;Mm-wave communication;5G;Handover Rate;Mobility Management},
  doi={10.1109/ISNCC49221.2020.9297352}}

@INPROCEEDINGS{9615850,
  author={Okaf, Abdanaser and Saied, Amamer and Qiu, Dongyu},
  booktitle={2021 International Symposium on Networks, Computers and Communications (ISNCC)}, 
  title={Analysis of Self-Blockage Impact on Handover Probability for User with Mobility in {5G} mm{W}ave Cellular Network}, 
  year={2021},
  volume={},
  number={},
  pages={1-5},
  keywords={Cellular networks;Computers;Analytical models;Fluctuations;5G mobile communication;Mathematical analysis;Handover;Mm-Wave communication;5G;Handover Probability;Mobility Management},
  doi={10.1109/ISNCC52172.2021.9615850}}

@ARTICLE{8643739,
  author={Jain, Ish Kumar and Kumar, Rajeev and Panwar, Shivendra S.},
  journal={IEEE Journal on Selected Areas in Communications}, 
  title={The Impact of Mobile Blockers on Millimeter Wave Cellular Systems}, 
  year={2019},
  volume={37},
  number={4},
  pages={854-868},
  keywords={Quality of service;Analytical models;5G mobile communication;Computer network reliability;Telecommunication network reliability;Cellular networks;Macrodiversity;static blockages;mobile blockers;self-blockage;reliability;5G;mmWave;stochastic geometry;URLLC;QoS;LOS;NLOS;network planning},
  doi={10.1109/JSAC.2019.2898756}}

@INPROCEEDINGS{10558216,
  author={Palitharathna, Kapila W. S. and Vegni, Anna Maria and Diamantoulakis, Panagiotis D. and Suraweera, Himal A. and Krikidis, Ioannis},
  booktitle={2024 IEEE International Symposium on Circuits and Systems (ISCAS)}, 
  title={Handover Management through Reconfigurable Intelligent Surfaces for {VLC} under Blockage Conditions}, 
  year={2024},
  volume={},
  number={},
  pages={1-5},
  keywords={Wireless communication;Legged locomotion;Circuits and systems;Handover;Reconfigurable intelligent surfaces;Light emitting diodes;Visible light communication},
  doi={10.1109/ISCAS58744.2024.10558216}}

@article{you2022deploy,
  title={How to deploy intelligent reflecting surfaces in wireless network: {BS}-side, user-side, or both sides?},
  author={You, Changsheng and Zheng, Beixiong and Mei, Weidong and Zhang, Rui},
  journal={Journal of Communications and Information Networks},
  volume={7},
  number={1},
  pages={1--10},
  year={2022},
  publisher={PTP}
}

@article{wei2023equivalent,
  title={An equivalent model for handover probability analysis of {IRS}-aided networks},
  author={Wei, Haoyan and Zhang, Hongtao},
  journal={IEEE Transactions on Vehicular Technology},
  volume={72},
  number={10},
  pages={13770--13774},
  year={2023},
  publisher={IEEE}
}

@article{zhang2024discrete,
  title={Discrete-Time Modeling and Handover Analysis of Intelligent Reflecting Surface-Assisted Networks},
  author={Zhang, Hongtao and Wei, Haoyan},
  journal={arXiv preprint arXiv:2403.07323},
  year={2024}
}

@inproceedings{okaf2021analysis,
  title={Analysis of self-blockage impact on handover probability for user with mobility in {5G} mm-{W}ave cellular network},
  author={Okaf, Abdanaser and Saied, Amamer and Qiu, Dongyu},
  booktitle={2021 International Symposium on Networks, Computers and Communications (ISNCC)},
  pages={1--5},
  year={2021},
  organization={IEEE}
}

@ARTICLE{10078238,
  author={Iqbal, Subhyal Bin and Nadaf, Salman and Awada, Ahmad and Karabulut, Umur and Schulz, Philipp and Fettweis, Gerhard P.},
  journal={IEEE Access}, 
  title={On the Analysis and Optimization of Fast Conditional Handover With Hand Blockage for Mobility}, 
  year={2023},
  volume={11},
  number={},
  pages={30040-30056},
  keywords={Handover;Optimization;3GPP;Stochastic processes;Phantoms;Data models;5G mobile communication;Resource management;FR2;5G-advanced;mobility performance;multi-panel UE;hand blockage;fast conditional handover;signaling overhead;resource reservation optimization},
  doi={10.1109/ACCESS.2023.3260630}
}

@inproceedings{mollel2019handover,
  title={Handover management in dense networks with coverage prediction from sparse networks},
  author={Mollel, Michael and Ozturk, Metin and Kisangiri, Michael and Kaijage, Shubi and Onireti, Oluwakayode and Imran, Muhammad Ali and Abbasi, Qammer H},
  booktitle={2019 IEEE Wireless Communications and Networking Conference Workshop (WCNCW)},
  pages={1--6},
  year={2019},
  organization={IEEE}
}

@article{ghosh2020analyzing,
  title={Analyzing handover performances of mobility management protocols in ultra-dense networks},
  author={Ghosh, Shankar K and Ghosh, Sasthi C},
  journal={Journal of Network and Systems Management},
  volume={28},
  number={4},
  pages={1427--1452},
  year={2020},
  publisher={Springer}
}
\end{document}